\newcommand{\etal}{\textit{et al}. }
\newcommand{\ie}{\textit{i}.\textit{e}., }
\newcommand{\eg}{\textit{e}.\textit{g}., }
\title{\LARGE \bf
Joint Optimization of the Deployment and Resource Allocation of UAVs in Vehicular Edge Computing and Networks}
\author{Yuke Zheng, Bo Yang and Cailian Chen
}
\newtheorem{theorem}{\textbf{Theorem}}
\newtheorem{lemma}{\textbf{Lemma}}
\begin{document}

\maketitle
\thispagestyle{empty}
\pagestyle{empty}

\begin{abstract}

With the development of smart vehicles, computing-intensive tasks are widely and rapidly generated. To alleviate the burden of on-board CPU, connected vehicles can offload tasks to or make request from nearby edge server thanks to the emerging Mobile Edge Computing (MEC). However, such approach may sharply increase the workload of an edge server, and cause network congestion, especially in rural and mountain areas where there are few edge servers. To this end, a UAV-assisted MEC system is proposed in this paper, and joint optimization algorithm of the deployment and resource allocation of UAVs (JOAoDR) is proposed to decide the location and balance the resource and rewards of the UAVs. We solve a long-term profit maximization problem in terms of the operator.
Numerical results demonstrated that our algorithm outperforms other benchmarks algorithm, and validated our solution.

\end{abstract}

\section{INTRODUCTION}

In recent years, with the rapid development of Internet of Things (IoT), IoT have been rapidly shifting to application of artificial intelligence (AI) to transform smart devices, which generated massive computation data and transmission data.
Internet of Vehicles (IoV), as a special portion of IoT, have become smarter in supporting intelligent applications, such as on-board
cameras and embedded sensors, autonomous driving, intelligent platoon control, video-aided real-time navigation, interactive gaming, on-board Virtual Reality (VR) and Augmented Reality (AR) \cite{7994678,6616115,7908951,8543658}.

Different from common smart devices, the high mobility of vehicles cannot be overlooked, which indicated that data processing of vehicles need to be low delay and high reliability.
The on-board CPU was gradually overloading and cannot provide high-quality services \cite{7883826,8555636}.

Under this circumstance, Mobile Edge Computing (MEC) \cite{8650169,7901477,8315353} was proposed to tackle high transmission delay and network congestion.
This new paradigm brings computation and data storage closer to the location where it is needed, to improve response times and spectrum resources.
It allows the availability of the cloud servers inside or adjacent to the base station. The end-to-end latency perceived by the mobile terminal is therefore reduced with the MEC platform \cite{7883826,7833463}.
After the application of MEC in the IoV, computation-intensive tasks generated by vehicles can be offloaded to nearby ES to process instead of occupying on-board CPU resources, which only have limited computation capability in general \cite{dai2018joint}.
Furthermore, offloading computation-intensive tasks to edge server is not the only advantage which MEC brings. It is also easier and faster for vehicles to get the required information from the cloud center, such as area high definition map \cite{8376252}, nearby traffic density data and personal traffic demand \cite{7974775}.

However, once vehicles offload their computation-intensive tasks to a single edge server (which is commonly a base station) to relieve its own computing workload at the same time, the workload of the edge serve will rise sharply, especially in areas where the density of vehicles is relative high or during peak periods, causing latency and network congestion.

Unmanned aerial vehicles (UAVs) have been witnessed as a promising approach for offering extensive coverage and additional computation capability to smart mobile devices. Compared with infrastructure-based VEC, UAV-assisted VEC possesses more reliable line-of-sight (LoS) links \cite{8594571,8422277}. the mobility of the UAV makes it easier to deploy in most areas, to improve quality of service (QoS) and to maintain, having advantage in saving cost \cite{8516294}.

In related works, Zhang \etal \cite{8594571} investigated a UAV-assisted mobile edge computing system with stochastic computation tasks.
The system aims to minimize the average weighted energy consumption
of smart mobile devices and the UAV, subject to the constraints on computation offloading, resource allocation, and flying trajectory
scheduling of the UAV.
Zhou \etal \cite{8422277} studied a UAV-enabled wireless powered MEC system and formulate a power minimization problem to minimize the energy consumption of the UAV.
Both of these works have destined initial and final locations for only one UAV.

While in this paper, we considered a UAV-assisted VEC system with only one BS and multiple UAVs.
Each UAV provide service in a given area.
Note that the coverage of different UAVs may partially overlap, and for one vehicle (user), it maybe within the coverage of several UAVs at the same time. Thus we investigated the task scheduling and cooperation among UAVs, as well as the deployment of them in each time period, aiming to maximize the long-term profit of the UAVs while balancing the energy consumption from the operator's perspective.

It is common that massive computation-intensive tasks and service requests are generated in a stochastic model for the operator, so that existing offloading strategies for deterministic tasks cannot be well applied.
Besides, in this paper, we consider new energy powered UAVs. The UAV can collect energy and recharge itself. The process of energy harvest also has stochastic nature, which cannot be ignored when a long-term performance is desired.
To this end, we utilize Lyapunov optimization \cite{6813406} to handle the issue where energy-efficient and profit-maximizing decisions must be made without knowing the future energy harvest or tasks arrival.

The main contributions of this paper are stated as follows.
\begin{itemize}
    \item We considered a UAV-assisted VEC system with only one BS and multiple UAVs where ground vehicles can generate tasks and service request.
    We characterized the uplink and downlink communication time delay simultaneously.
    Different from most works which ignore downlink transmission, as mentioned before, vehicles may request area high definition map, nearby traffic density data and so on, so we cannot ignore the effect of downlink.
    \item Unlike works towards stationary RSUs and other smart devices, the mobility of UAVs and ground vehicles is well considerd in this paper, which may have great influence on matching strategy and the deployment of UAVs.
    \item Due to the stochastic nature of both ground users' tasks arrival and UAVs' energy harvest, in order to obtain long-term profit maximization, by leveraging a Lyapunov-based approach, we balanced the profit and the remaining battery power of the UAVs.
    To sum up, we proposed the joint optimization algorithm of the deployment and resource allocation of UAVs (JOAoDR) to solve the formulated problem.
\end{itemize}

The rest of this paper is organized as follows.
Section \ref{sec2} describes the system model.
The long-term profit maximization problem is formulated in Section \ref{sec3}.
Section \ref{sec4} presents the Lyapunov-based approach to transform the origin problem and our solution.
Section \ref{sec5} shows the the numerical results.
Finally, Section \ref{sec6} concludes the paper.

\section{SYSTEM MODEL} \label{sec2}

\subsection{Network Architecture}

We considered a rural area with a base station BS whose signal covers the entire area.
There are $\left|\mathcal{B}\right|$ UAVs equipped with MEC servers having idle computing resources to help the BS to provide service within the area, denoted by set
$\mathcal{B}=\left\{ 1,2,\cdots,\left|\mathcal{B}\right| \right\}$
, where $|\cdot|$ denotes the cardinality of a set.
The coverage of each UAV is given and fixed.
Note that the 'coverage' here is not the exact wireless signal coverage but a fixed service range given in advance. 
The diameter of the 'coverage' is set to the radius of the exact wireless signal coverage so that each UAV can hover at any area within the given coverage, providing stable service.
The system is divided by time slots
$t\in\mathcal{T}=\left\{1,2,\cdots,T\right\}$, where $\tau$ denotes the length of a time slot.
Let $\mathcal{U}(t)=\left\{1,2,\cdots,\left|\mathcal{U}(t)\right|\right\}$ denote the set of ground vehicles within the coverage of the BS in time slot $t$.
It is assumed that the position of the vehicles do not change in one time slot, and may vary in different time periods. The historical distribution data of vehicles in the area is easily obtain, and the distribution probability density function is represented by $f_{x,y}(t)$.

As shown in Fig. \ref{fig1}, the coverage of UAVs can partially overlap each other. Therefore, vehicle $m$ may be in the service range of multiple UAVs at the same time. 
In Fig. \ref{fig1}, vehicle 4 can access to either UAV $2$ or UAV $3$ for task process at current time.
At the beginning of each time slot, computation tasks and traffic request may generated by vehicles in a stochastic manner. Then those vehicles transmit corresponding task information to the BS, which collects all of the information and acts as the dispatch center to determine the resource allocation of UAVs.

\textit{Notations:} In this paper, unless otherwise specified, we use $i$ and $m$ to identify the index of UAVs, vehicles with tasks, respectively. 

\begin{figure}[tpb]
    \centering
    \includegraphics[scale=0.18]{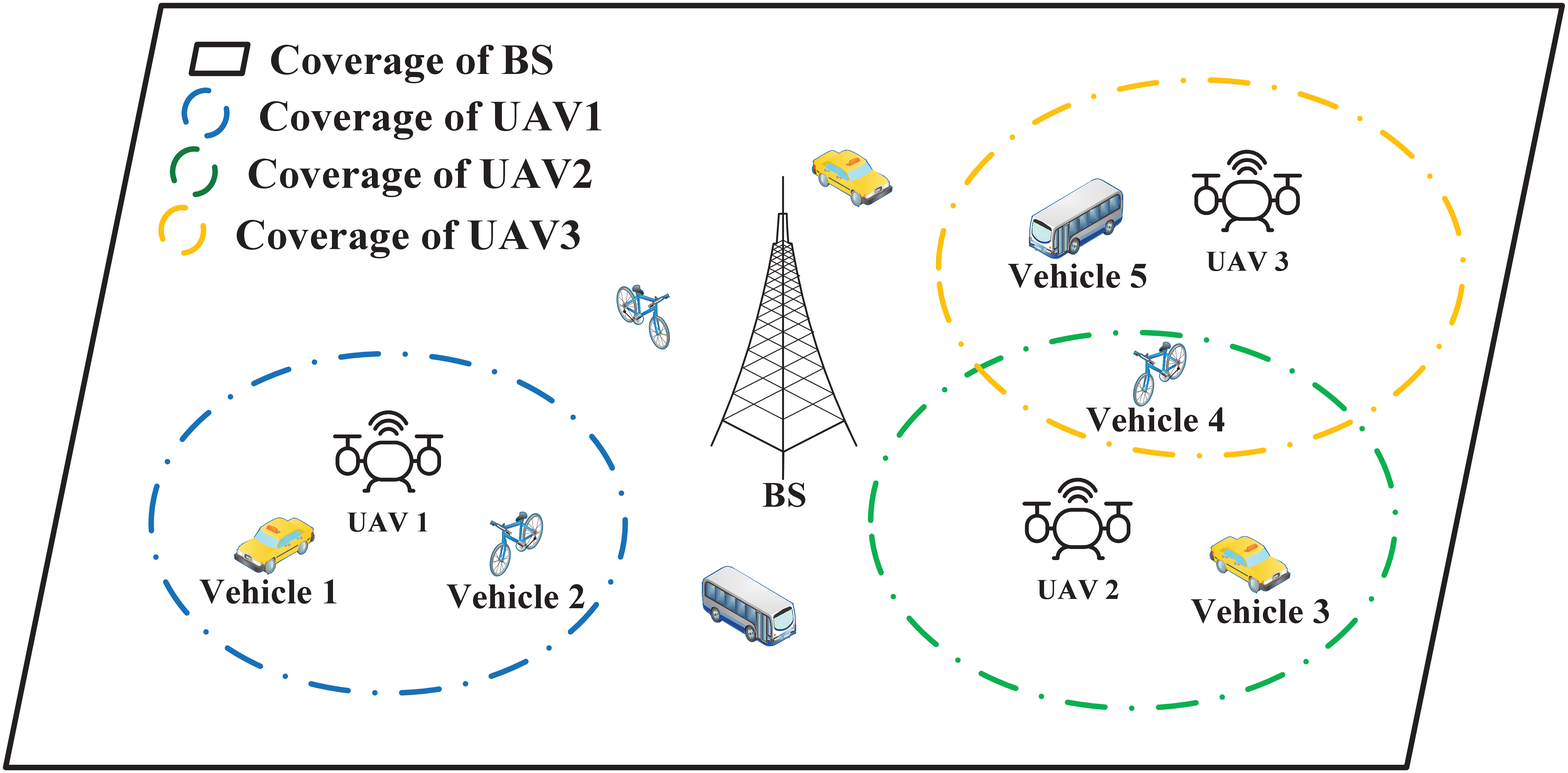}
    \caption{An example of network model}
    \label{fig1}
\end{figure}

\subsection{UAVs Trajectories Model}

Like most works, we assume that all the UAVs hovers at a fixed altitude $h$, which is determined by their mechanical parameters. And $\left(x_i(t),y_i(t)\right)$ can be used to represent the horizontal coordinate of UAV $i$ at time slot $t$. The maximum speed of UAVs denotes as $V_{max}$. Then in one time slot, the longest distance a UAV can fly is no more than $\tau V_{max}$, \ie
\begin{align}\label{eq4}
    \sqrt{(x_i (t+1)-x_i (t))^2+(y_i (t+1)-y_i (t))^2 }&\leq \tau V_{max} ,    \nonumber\\\forall  i&\in \mathcal{B}.
\end{align}

\subsection{Task Processing Model}

For simplicity of exposition, the path of vehicles are assumed to be straight and the speed are assumed to be fixed.

\subsubsection{Characteristics of the Computation Tasks}

At the beginning of each time period, each ground vehicle may generate a computing-intensive task or have traffic service request, and transmit the task information to the BS. The information can be expressed as a tuple $A_m (t) = \left \langle I_m (t), O_m (t), \phi_m (t), p_m (t),\lambda_m,(x_m(t),y_m(t)),\vec{v}_m(t)\right\rangle$, where $m$ represent the index of tasks in time slot t, $m \in \mathcal{M}(t)$. To be more specific, $I_m (t)$ and $ O_m (t)$ denote the input data size and output data size, respectively. $\phi_m(t)$ is the number of CPU cycles required to process the task. $p_m (t)$ denotes the remuneration to complete the task. $\lambda_m$ denotes the minimum acceptable quality of service (QoS), \ie transmission rate of downlink. While $(x_m(t),y_m(t))$ and $\vec{v}_m(t)$ are the position and velocity of vehicle $m$.

\subsubsection{Uplink Model}
Because the studied  scenario is a rural area and there are no high-rise occlusions, we modeled the air-to-ground propagation channel as a LoS link.
The freespace path loss model is adopted and the channel power gain
from vehicle $m$ to UAV $i$ can be expressed as
\begin{equation}
    h_{im}(t)=g_0 d_{im}^{-2}(t),
\end{equation}
where $g_0$ is the channel power gain at the reference distance $d = 1$ m. And $d_{im}(t)$ denotes the Euclidean distance between UAV $i$ and ground vehicle $m$ in time slot $t$, which are given by
$d_{im}(t)=\sqrt{(x_i(t)-x_m(t))^2+(y_i(t)-y_m(t))^2+h^2 }$.
Let $N_0$ denote the noise power. Furthermore, the Signal-to-Noise-Ratio (SNR) received at UAV $i$ from ground vehicle $m$ is
\begin{equation}
    \Gamma_{im}(t)=\frac{P_m(t) h_{im}(t)}{N_0},
\end{equation}
where $P_m(t)$ denotes the transmission power of the $m$th vehicle.
In this paper, for simplicity, we assume that there are enough wireless channels for communication, and the bandwidth of each channel is $W$ Hz. By applying Orthogonal Frequency Division Multiple Access (OFDMA) technique, each communication between UAVs and ground vehicles can access one channel.
Although such assumption may lead to waste of spectrum resources, in our scenario, \ie rural areas, there are enough idle resources to utilize.

According to Shannon–Hartley theorem, the maximum transmission rate from ground vehicle $m$ to UAV $i$ can be expressed as 
\begin{equation}
    R_{im}(t)=W \log_2(1+\Gamma_{im}(t)).
\end{equation}
Limited by coding methods and other reasons, the actual transmission rate $\bar{R}$ cannot reach the ideal rate, \ie
\begin{equation}
    \bar{R}_{im}(t)=\gamma R_{im}(t), \gamma \in (0,1).
\end{equation}
Once we obtain the actual transmission rate, the time slot delay caused by the uplink communication can be calculated by
\begin{equation}
    t_{im}^d(t)=\left \lceil \frac{I_m(t)}{\gamma \bar{R}_{im}(t)} \right \rceil,i\in \mathcal{B},m\in \mathcal{M}(t),
\end{equation}
where $\lceil\cdot\rceil$ denotes rounding up.

\subsubsection{Binary Decision Variables}

Let $\textbf{J}(t)=\{J_{im}(t)\},i\in \mathcal{B},m\in \mathcal{M}(t)$ denote the binary decision matrix, where $J_{im}(t)=1$ means that UAV $i$ is chosen to process task or request from ground vehicle $m$ at time slot $t$, otherwise $J_{im}(t)=0$.
Typically, a user's task can only be assigned to at most one UAV, \ie
\begin{equation}\label{eq1}
    \sum_{i\in\mathcal{B}}J_{im}(t)\leq 1.
\end{equation}
Furthermore, due to the strong mobility of ground vehicles, we must ensure that vehicle $m$ cannot move out the coverage of UAV $i$ before the input data is completely transmitted to the UAV $i$, otherwise $J_{im}(t)=0$.
To be more specific, define a set $\mathcal{B}_m(t)$, $i\in\mathcal{B}_m(t)$ represents vehicle $m$ is within the coverage of UAV $i$ in time slot $t$. Then, equation (\ref{eq1}) can be expanded to
\begin{align}\label{eq2}
    \sum_{i\in\mathcal{B}}J_{im}(t)= \sum_{i\in\mathcal{B}_m(t)}J_{im}(t)=\sum_{i\in\mathcal{B}_m(t+t_{im}^d(t))}J_{im}(t)\leq 1\\
    \sum_{i\notin\mathcal{B}_m(t)}J_{im}(t)=\sum_{i\notin\mathcal{B}_m(t+t_{im}^d(t))}J_{im}(t)=0,J_{im}(t)\in\{0,1\}
    \label{eq3}
\end{align}
From (\ref{eq2}) and (\ref{eq3}) we can obtain that whether UVA $i$ can handle user $m$'s task at the current time period is not only affected by the vehicle position at the current time, but also by the vehicle position after $t_{im}^d(t)$ time slots.

\subsubsection{QoS}

Let $\textbf{P}(t)=\{P_{im}(t)\},i\in \mathcal{B},m\in \mathcal{M}(t)$ denote downlink transmission power matrix, where $P_{im}(t)$ is a optimization variable, representing the transmission power from UAV $i$ to ground vehicle $m$.
Once UAV $i$ is chosen to process task or request from ground vehicle $m$ at time slot $t$, \ie $J_{im}(t)=1$, it has to wait for $t_{im}^d(t)$ time slots until the task is completely uploaded. Then UAV $i$ computes the task and returns output data to vehicle $m$ within only one time slot, otherwise vehicle $m$ may move out of the coverage, which may cause connection loss.
Likewise, the downlink transmission rate
\begin{equation}
    r_{im}(t)=\gamma W \log_2\left(1+\frac{P_{im}(t)g_0}{d_{im}^2 \!\! (t+t_{im}^d(t))N_0} \right),
\end{equation}
where $d_{im} \!\! \left(t+t_{im}^d(t)\right)$ is the distance between UAV $i$ and vehicle $m$ in time slot $t+t_{im}^d(t)$.
The transmission rate must satisfy the the minimum acceptable QoS of user $m$, \ie
\begin{equation}\label{eq5}
    r_{im}(t)\geq \lambda_mJ_{im}(t).
\end{equation}
Let $s_{im}(t)$ denote the CPU speed (in cycle per second) of UAV $i$ to compute task from vehicle $m$ generated in time slot $t$. Note that if $J_{im}(t)=1$, $s_{im}(t)$ is in fact the CPU speed for corresponding task after $t_{im}^d(t)$ slots, because of uploading delay.
Due to the powerful computation capacity and transmission capacity compared to ground vehicles, it’s assumed that a UAV can adjust computing frequency and transmission power to finish the task within one time slot $t$, \ie
\begin{equation}\label{eq6}
    \left(\frac{\phi_m(t)}{s_{im}(t)}+\frac{O_m(t)}{r_{im}(t)}\right)\cdot J_{im}(t)\leq \tau
\end{equation}

\subsubsection{UAV Energy Consumption}

UAV energy consumption can be divided into three parts, namely receiving energy consumption, calculating energy consumption and transmitting energy consumption. 
Suppose UAV $i$ serves vehicle $m$. 
The receiving energy consumption is given by $E_{im}^{rec}(t)=C_i^rI_m(t)$, where $C_i^r$ denotes the energy consumption of UAV $i$ for receiving one-unit input data size from ground vehicle $m$.
The calculating power consumption is given by
\begin{equation}
    p_{im}^{cpu}(t)=\alpha_is_{im}^3(t)+\beta_i
\end{equation}
by adopting a simple computing model \cite{6848173,8651320}, where $\alpha_i$ and $\beta_i$ are two parameters determined by UAV $i$'s CPU system.
We can further derive the calculating energy consumption by
\begin{equation}
    E_{im}^{cpu}(t)=p_{im}^{cpu}(t)\cdot\frac{\phi_m(t)}{s_{im}(t)}.
\end{equation}
Finally, the transmitting energy consumption can be obtained by 
\begin{equation}
    E_{im}^{snd}(t)=P_{im}(t)\cdot\frac{O_m(t)}{r_{im}(t)}.
\end{equation}
To sum up, the energy consumption of UAV $i$ for serving vehicle $m$ is
\begin{equation}
    E_{im}(t)=E_{im}^{rec}(t)+E_{im}^{cpu}(t)+E_{im}^{snd}(t).
\end{equation}

\subsubsection{Processing Capability}

The parallel computing capability of a UAV is limited, and the number of channels allocated to each UAV is limited, too. Without loss of generality, we supposed that the maximum number of tasks that UAV $i$ can perform parallel at the same time is equal to the number of channels allocated to it, denoted as $c_i$.
Once UAV $i$ is chosen to serve groung vehicle $m$ in time slot $t$, the corresponding channel will be occupied until $t+t_{im}^d(t)+1$, because $t_{im}^d(t)$ slots are needed to upload input data and one slot is for UAV to compute task and return output data.
Let $y_i(t)$ denote the number of channels which are occupied during time slot $t$.
Accordingly, 

\begin{alignat}{2}\label{eq15}
    y_i(t)&=\!\!\!\sum_{m\in\mathcal{M}(t)}\!\!\!J_{im}(t)&&+\!\!\!\!\sum_{m\in\mathcal{M}(t-1)}\!\!\!\!F (t_{im}^d(t-1))\cdot J_{im}(t-1)\nonumber\\
    &&&+\cdots\nonumber\\
    &&&+\!\!\! \sum_{m\in\mathcal{M}(0)}F (t_{im}^d(0)-(t-1))\cdot J_{im}(0)\nonumber\\
    &\mathrlap{=\sum_{k=0}^{t-1}\sum_{m\in\mathcal{M}(k)}F (t_{im}^d(k)-(t-k-1))\cdot J_{im}(k)}\nonumber\\
    &&&+\sum_{m\in\mathcal{M}(t)}J_{im}(t)
\end{alignat}

where we define $F(x)=\left\{\begin{aligned}
1,x>0\\
0,x\leq 0
\end{aligned}\right.$.
The number of channels which are occupied during time slot $t$ can not exceed the maximum number of channels, \ie
\begin{equation}\label{eq7}
    y_i(t)\leq c_i, \forall i\in\mathcal{B}.
\end{equation}

\subsection{Dynamic Battery Power Model}

UAVs can charge themselves and the charging process is stochastic process. Let $\eta_i(t)$ be the electrical energy collected by UAV $i$ through solar conversion
during time period $t$, which is upper bounded by $\eta_{max}$.
Let $\textbf{E}(t)=\{E_i(t)\},i\in\mathcal{B}$ be the dynamic battery energy queue vector. 

How much the UAV can charge at the current slot depends on its current battery power and the energy it collects.
To be more specific, the amount of energy which UAV $i$ charge itself is
\begin{equation}
    e_i(t)=\min (\theta_i-E_i(t),\eta_i(t)),
\end{equation}
where $\theta_i$ is the desired battery energy corresponding to UAV $i$. It means that UAV $i$ will charge itself in every slot until the desired battery energy is reached.
Then the dynamic of UAV $i$'s battery energy is
\begin{equation}\label{eq8}
    E_i(t+1)=E_i(t)+e_i(t)-\sum_{m\in\mathcal{M}(t)}J_{im}(t)E_{im}(t)
\end{equation}
Note that $E_i(t)$ is not exactly the battery energy of UAV $i$ in practice, because it predict the energy consumption $E_{im}(t)$ after $t_{im}^d(t)$ slots. But it can measure the remaining battery energy as much as possible.
Without loss of generality, we assume that the batteries of UAVs are full of charge initially, \ie $E_i(0)=\theta_i,\forall i\in\mathcal{B}$.
For simplicity, we assume UAVs charge themselves at the end of each slot.
Further we can obtain the energy consumption constraint in time slot $t$
\begin{equation}\label{eq9}
    E_i(t)\geq\sum_{m\in\mathcal{M}(t)}J_{im}(t)E_{im}(t),\quad\forall i\in\mathcal{B}.
\end{equation}

\section{PROBLEM FORMULATION} \label{sec3}

In this paper, we focus on maximizing the long-term profit of UAVs.
The remuneration of UAVs in time slot $t$ is given by
\begin{equation}
    R(t)=\sum_{i\in\mathcal{B}}\!\sum_{m\in\mathcal{M}(t)}\!J_{im}(t)p_m(t)
\end{equation}
Then the problem can be formulated as follows
\[\begin{gathered}\textbf{P1:}\max_{\textbf{P}(t),\textbf{s}(t),\textbf{J}(t), \textbf{L}(t)}\lim_{T\to\infty}\frac{1}{T+1}\sum_0^T\mathbb{E}\!\left\{\! \sum_{i\in\mathcal{B}}\!\sum_{m\in\mathcal{M}(t)}\!\!\!\!\!J_{im}(t)p_m(t)\!\right\}\\
\text{s.t. }(\ref{eq4}),(\ref{eq2}),(\ref{eq3}),(\ref{eq5}),(\ref{eq6}),(\ref{eq7}),(\ref{eq8}) \text{ and } (\ref{eq9})\\
s_{im}(t)\leq s_{i,max},i\in\mathcal{B},m\in\mathcal{M}(t)
\end{gathered}\]

In the above formulation,the optimization variable matrix $\textbf{s}(t)=\{s_{im}(t)\},i\in\mathcal{B}, m\in\mathcal{M}(t)$ denotes the UAVs' CPU speed matrix in time slot $t$, while $\textbf{L}(t)=\{\left(x_i(t),y_i(t)\right)\}, i\in\mathcal{B}$ repersents the horizontal coordinate matrix of UAVs in time slot $t$.
The last constraint shows that the CPU speed of UAV $i$ allocated for each task cannot exceed $s_{i,max}$ because of the limited parallel computing capacity.

Solving such an optimization problem requires not only the decision of the current time slot, but also the decision of the future, which is difficult to solve without knowing in advance the energy collection and task arrival in the future. In the next section, we apply a Lyapunov-based approach to transform the long-term optimization into single-slot optimizations that can be solved separately. Then, our JOAoDR is proposed to solve the problem.

\section{PROBLEM TRANSFORMATION AND SOLUTION} \label{sec4}

In this section, in order to solve the long-term maximization problem which is difficult to analyze, we first transform the original optimization problem \textbf{P1} into several single-slot optimizations base on Lyapunov optimization, by which we removed the relevance of the problem in continuous time.
However, the transformed problem is still hard to solve due to the strong coupling between variables.
Accordingly, we divided the transformed problem into two stages,\ie online solving stage and offline solving stage.

\subsection{Problem Transformation}

For simplicity to express, we define $\textbf{Q}(t)=\{Q_i(t)\},i\in\mathcal{B}$, where $Q_i(t)=\theta_i -E_i(t)$.
To ensure the stability of UAVs batteries power , we define the Lyapunov function as
\begin{equation}
    L(t)=\frac{1}{2}\sum_{i\in\mathcal{B}}Q_i^2(t)
\end{equation}
This definition intuitively means that we expect the battery power of UAV $i$ to be as close to the corresponding parameter $\theta_i$ as possible by minimizing the drift of the Lyapunov function.
The Lyapunov drift can be defined as
\begin{equation}\label{eq10}
    \Delta(t)=\mathbb{E}\left\{L(t+1)-L(t)|\textbf{Q}(t)\right\}.
\end{equation}

By adding the penalty function (subtracting the profit of UAVs ) on both sides of (\ref{eq10}), the drift-plus-penalty (drift-minus-reward) function can be given by
\begin{align}
    \Delta_V(t)&=\Delta(t)-V\mathbb{E}\{R(t)|\textbf{Q}(t)\}\nonumber\\
    &=\mathbb{E}\left\{L(t+1)-L(t)-VR(t)|\textbf{Q}(t)\right\},
\end{align}
where $V$ is a control parameter to deal with the tradeoff between UAVs profit and batteries power.
After incorporate the UAVs profit into the drift-plus-penalty function, we can transform the origin optimization problem into minimizing $\Delta_V(t)$ at each time slot.
\begin{theorem}
The given drift-plus-penalty function $\Delta_V(t)$ is upper bounded by
\begin{align}\label{eq13}
    &\Delta_V(t)\leq A-\mathbb{E}\left\{\sum_{i\in\mathcal{B}}Q_i(t)e_i(t)|\textbf{Q}(t)\right\}\nonumber\\
    \!\!&-\!\!\mathbb{E}\left\{\sum_{i\in\mathcal{B}}\!\sum_{m\in\mathcal{M}(t)}\!\!\!\!\left[VJ_{im}\!(t)p_m\!(t)-Q_i\!(t)J_{im}\!(t)E_{im}\!(t)\right]|\textbf{Q}(t)\right\},
\end{align}
where $A=\frac{|\mathcal{B}|}{2}\eta_{max}^2+\frac{1}{2}\sum_{i\in\mathcal{B}}c_iE_{max}$ is a constant.
\end{theorem}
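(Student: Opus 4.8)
The plan is to derive (\ref{eq13}) directly from the battery dynamics by expanding the Lyapunov drift and then discarding only a nonnegative quadratic remainder after bounding it by the constant $A$. First I would rewrite the queue recursion (\ref{eq8}) in terms of the virtual queue $Q_i(t)=\theta_i-E_i(t)$. Writing $D_i(t)=\sum_{m\in\mathcal{M}(t)}J_{im}(t)E_{im}(t)$ for the per-slot energy drain, the dynamics become $Q_i(t+1)=Q_i(t)-e_i(t)+D_i(t)$. Squaring and subtracting $Q_i^2(t)$ gives $Q_i^2(t+1)-Q_i^2(t)=-2Q_i(t)e_i(t)+2Q_i(t)D_i(t)+(e_i(t)-D_i(t))^2$, so summing over $i\in\mathcal{B}$ and halving splits $L(t+1)-L(t)$ into terms linear in $Q_i(t)$ plus the quadratic remainder $\tfrac12\sum_{i}(e_i(t)-D_i(t))^2$.

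Next I would take the conditional expectation $\mathbb{E}\{\cdot\,|\,\textbf{Q}(t)\}$ and subtract $V\,\mathbb{E}\{R(t)|\textbf{Q}(t)\}$, with $R(t)=\sum_{i}\sum_{m}J_{im}(t)p_m(t)$. Substituting $Q_i(t)D_i(t)=\sum_m Q_i(t)J_{im}(t)E_{im}(t)$ groups the $V$-scaled reward together with the consumption term into exactly the bracketed expression appearing in (\ref{eq13}), while the terms $-2Q_i(t)e_i(t)$ produce the factor $-\mathbb{E}\{\sum_i Q_i(t)e_i(t)|\textbf{Q}(t)\}$. At this stage the identity for $\Delta_V(t)$ is exact, and everything reduces to showing $\tfrac12\sum_{i\in\mathcal{B}}(e_i(t)-D_i(t))^2\le A$.

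This last estimate is the crux and the only place the model constraints enter. Since $e_i(t)=\min(\theta_i-E_i(t),\eta_i(t))\ge 0$ and $D_i(t)\ge 0$, the cross term is nonpositive, so $(e_i(t)-D_i(t))^2\le e_i^2(t)+D_i^2(t)$. The charging bound $e_i(t)\le\eta_i(t)\le\eta_{max}$ controls the first piece via $\tfrac12\sum_i e_i^2(t)\le\tfrac{|\mathcal{B}|}{2}\eta_{max}^2$. For the second piece I would invoke the processing-capability constraint (\ref{eq7}): because $\sum_{m}J_{im}(t)\le y_i(t)\le c_i$, at most $c_i$ summands of $D_i(t)$ are nonzero, and a Cauchy--Schwarz step $D_i^2(t)\le\big(\sum_m J_{im}(t)\big)\big(\sum_m J_{im}(t)E_{im}^2(t)\big)\le c_i E_{max}$ (with $E_{max}$ the corresponding per-UAV energy bound on $\sum_m J_{im}(t)E_{im}^2(t)$) yields $\tfrac12\sum_i D_i^2(t)\le\tfrac12\sum_i c_i E_{max}$. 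Adding the two pieces reproduces $A$ exactly and closes the argument.

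I expect the main obstacle to be the $D_i^2(t)$ estimate: one must use the channel/parallelism cap $c_i$ rather than the raw number of in-range vehicles, and pair it with a per-task energy bound so that the resulting constant is linear in $c_i$ as in the stated $A$, rather than quadratic. By contrast, the linear-term bookkeeping and the discarding of the nonpositive cross term are routine once the nonnegativity of $e_i(t)$ and $D_i(t)$ is noted.
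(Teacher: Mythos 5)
Your proof is correct and takes essentially the same route as the paper's: both expand the one-slot drift of $Q_i(t)=\theta_i-E_i(t)$ using the battery dynamics (\ref{eq8}), discard the nonpositive cross term $-e_i(t)D_i(t)$ (valid since $e_i(t)\ge 0$ and $D_i(t)\ge 0$), and absorb the remaining quadratic terms into the constant $A$. If anything, your derivation of $A$ --- via $e_i(t)\le\eta_{max}$ and a Cauchy--Schwarz step using the cap $\sum_{m}J_{im}(t)\le c_i$ from (\ref{eq7}) --- is more explicit than the paper's, which stops at $A_i=\frac{1}{2}e_i^2+\frac{1}{2}\bigl(\sum_{m}J_{im}E_{im}\bigr)^2$ and asserts the final constant without spelling out these bounds.
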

\begin{proof}
Let's define $L_i(t) = \frac{1}{2}Q_i^2(t),\forall i\in\mathcal{B}$ and $\Delta_i(t)=\mathbb{E}\left\{L_i(t+1)-L_i(t)|\textbf{Q}(t)\right\}$. Substituting the first equation into the second yields
\begin{align}\label{eq11}
    \Delta_i(t)=&\frac{1}{2}\mathbb{E}\left\{\left[E_i^2(t+1)-E_i^2(t)\right]|\textbf{Q}(t)\right\}\nonumber\\
    &-\theta_i\cdot \mathbb{E}\left\{e_i(t)-\sum_{m\in\mathcal{M}(t)}J_{im}(t)E_{im}(t)|\textbf{Q}(t)\right\}.
\end{align}
Substituting (\ref{eq8}) into (\ref{eq11}), we omit the time indication $(t)$ without affecting expression for simplicity and get

\begin{alignat}{2}\label{eq12}
    \Delta_i&=\frac{1}{2}e_i^2&&+\frac{1}{2}\left(\sum_{m\in\mathcal{M}}J_{im}E_{im}\right)^2- \mathbb{E}\left\{ Q_ie_i |\textbf{Q}(t)\right\}\nonumber\\ &&&+\mathbb{E}\left\{(Q_i-e_i)\sum_{m\in\mathcal{M}}J_{im}E_{im}|\textbf{Q}(t)\right\}\nonumber\\
    &\mathrlap{\leq A_i-\mathbb{E}\left\{Q_ie_i|\textbf{Q}(t)\right\}}\nonumber\\&&&+\mathbb{E}\left\{Q_i\sum_{m\in\mathcal{M}}J_{im}E_{im}|\textbf{Q}(t)\right\},
\end{alignat}

where $A_i=\frac{1}{2}e_i^2+\frac{1}{2}\left(\sum_{m\in\mathcal{M}}J_{im}E_{im}\right)^2$.

By summing (\ref{eq12}) from $i=1$ to $i=|\mathcal{B}|$ and adding $-V\mathbb{E}\{R(t)|\textbf{Q}(t)\}$ on both sides of (\ref{eq12}), we obtain
\begin{align}
    \Delta_V\leq&A-\mathbb{E}\left\{\sum_{i\in\mathcal{B}}Q_ie_i|\textbf{Q}\right\}\nonumber\\
    &+\mathbb{E}\left\{\sum_{i\in\mathcal{B}}\!\sum_{m\in\mathcal{M}}Q_iE_{im}J_{im}-VJ_{im}p_m|\textbf{Q}\right\}.
\end{align}
Rearranging the terms yields (\ref{eq13}).
\end{proof}

Then, minimizing the drift-plus-penalty function $\Delta_V(t)$ is equivalent to minimizing the right-hand-side (RHS) of (\ref{eq13}).
We can further solve the following optimization problem
\[\begin{gathered}\textbf{P2:}\max_{\textbf{P}(t),\textbf{s}(t),\textbf{J}(t), \textbf{L}(t)}\sum_{i\in\mathcal{B}}\!\sum_{m\in\mathcal{M}(t)}\!\!\!\!J_{im}\!(t)\left[Vp_m\!(t)-Q_i\!(t)E_{im}\!(t)\right]\\
\text{s.t. }(\ref{eq4}),(\ref{eq2}),(\ref{eq3}),(\ref{eq5}),(\ref{eq6}),(\ref{eq7}) \text{ and } (\ref{eq9})\\
s_{im}(t)\leq s_{i,max},i\in\mathcal{B},m\in\mathcal{M}(t)
\end{gathered}\]

\subsection{Online Optimization}
 
In the aforementioned problem \textbf{P2}, the term $v_{im}(t)\triangleq Vp_m\!(t)-Q_i\!(t)E_{im}\!(t)$ can be regarded as the weight of $J_{im}(t)$. 
The optimal deployment of UAVs $\textbf{L}(t)$ is difficult to get an explicit solution in such problem.
In the following, we first obtained the optimal solution of $\textbf{P}(t),\textbf{s}(t)$ and $\textbf{J}(t)$ with fixed $\textbf{L}(t)$.
Then we provided a feasible offline solution to optimize $\textbf{L}(t)$ in next subsection.

We extract a sub-problem \textbf{P3} from problem \textbf{P2}, \ie
\[\begin{gathered}\textbf{P3:}\min_{P_{im}(t),s_{im}(t)}E_{im}(t)\\
\text{s.t. }(\ref{eq5}) \text{ and }(\ref{eq6}) \\
s_{im}(t)\leq s_{i,max}
\end{gathered}\]
\begin{theorem}\label{th2}
The optimal solution for \textbf{P3} is necessary conditions for the optimal solution for \textbf{P2}. \ie for those $J_{im}^*(t)=1$ in \textbf{P2}, the corresponding $s_{im}^*(t)$ and $P_{im}^*(t)$ are equal to the optimal solution $s_{im}^{**}(t)$ and $P_{im}^{**}(t)$ for \textbf{P3}.
\end{theorem}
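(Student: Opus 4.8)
The plan is to prove this by a decoupling-and-exchange argument: fix an optimal solution of \textbf{P2}, freeze the assignment $\textbf{J}(t)$ and the positions $\textbf{L}(t)$, and show that each active pair's continuous variables must individually minimize $E_{im}(t)$, hence solve \textbf{P3}.

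First I would record the structural facts that make the \textbf{P2} objective separable in $(P_{im}(t),s_{im}(t))$. In the single-slot problem the battery state $E_i(t)$, and hence $Q_i(t)=\theta_i-E_i(t)$, is a given constant; since each UAV charges only up to $\theta_i$ we have $Q_i(t)\geq 0$. The weight $V$ and the remuneration $p_m(t)$ are also constants. Therefore the only way $(P_{im}(t),s_{im}(t))$ enters the \textbf{P2} objective is through the term $-Q_i(t)E_{im}(t)$ attached to each active pair (those with $J_{im}(t)=1$). I would also note that the uplink delay $t_{im}^d(t)$ depends only on the vehicle's given transmit power $P_m(t)$ and on positions, so once $\textbf{L}(t)$ is fixed it is a constant; consequently $r_{im}(t)$ and $E_{im}(t)$ become functions of $(P_{im}(t),s_{im}(t))$ for that pair alone.

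Next I would verify that the constraints decouple across pairs. For fixed $\textbf{J}^*(t)$ and $\textbf{L}^*(t)$, the variables of an active pair $(i,m)$ appear only in the QoS constraint~(\ref{eq5}), the completion-time constraint~(\ref{eq6}), and the box constraint $s_{im}(t)\leq s_{i,max}$ --- which are exactly the constraints of \textbf{P3} --- together with the per-UAV energy budget~(\ref{eq9}), i.e.\ $\sum_{m}J_{im}(t)E_{im}(t)\leq E_i(t)$. The remaining constraints~(\ref{eq4}),(\ref{eq2}),(\ref{eq3}),(\ref{eq7}) involve only $\textbf{J}(t)$ and $\textbf{L}(t)$ and are untouched. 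This is the step I expect to be the crux, since~(\ref{eq9}) couples all the $E_{im}(t)$ served by UAV $i$; the key observation is that~(\ref{eq9}) is a one-sided upper bound, so shrinking any $E_{im}(t)$ only relaxes it and never creates infeasibility.

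With these observations the exchange argument is routine. Suppose for contradiction that $(\textbf{P}^*(t),\textbf{s}^*(t),\textbf{J}^*(t),\textbf{L}^*(t))$ solves \textbf{P2} but for some active pair $(i_0,m_0)$ the pair $(P^*_{i_0m_0}(t),s^*_{i_0m_0}(t))$ is not optimal for \textbf{P3}; let $(P^{**}_{i_0m_0}(t),s^{**}_{i_0m_0}(t))$ attain the \textbf{P3} optimum with $E^{**}_{i_0m_0}(t)<E^{*}_{i_0m_0}(t)$. Forming a new candidate that replaces only that pair's variables by the \textbf{P3} optimum and leaves everything else unchanged, one checks that it satisfies~(\ref{eq5}),(\ref{eq6}) and the box constraint by \textbf{P3}-feasibility, satisfies~(\ref{eq9}) because its left-hand side only decreased, and satisfies all remaining constraints trivially. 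Its objective exceeds the original by $Q_{i_0}(t)\left(E^{*}_{i_0m_0}(t)-E^{**}_{i_0m_0}(t)\right)\geq 0$, strictly positive whenever $Q_{i_0}(t)>0$, contradicting optimality; in the degenerate case $Q_{i_0}(t)=0$ the substitution leaves the objective unchanged, so one may assume without loss of generality that the active variables already coincide with the \textbf{P3} optimum. This shows that solving \textbf{P3} is necessary for the optimality of \textbf{P2}.
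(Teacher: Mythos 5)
Your proof follows essentially the same route as the paper's: a contradiction/exchange argument that swaps an active pair's $\left(P_{im}(t),s_{im}(t)\right)$ for the \textbf{P3} optimum, keeps everything else fixed, and observes that since $Q_i(t)\geq 0$ the \textbf{P2} objective can only increase. In fact your write-up is slightly more careful than the paper's, which dismisses feasibility of the swapped solution as ``obvious'' and tacitly assumes $Q_i(t)>0$; your two observations --- that constraint (\ref{eq9}) is a one-sided bound, so decreasing $E_{im}(t)$ preserves feasibility, and that the degenerate case $Q_{i_0}(t)=0$ requires a separate (WLOG) remark --- close those small gaps.
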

\begin{proof}
We prove by contradiction.
Assume that the optimal solution of \textbf{P2} are $\textbf{P}^*(t),\textbf{s}^*(t),\textbf{J}^*(t)$, and without loss of generality, we assume $J_{11}^*(t)=1$ and the corresponding $s_{11}^*(t),P_{11}^*(t)$ are not equal to $s_{11}^{**}(t),P_{11}^{**}(t)$, respectively, in \textbf{P3} with $i=1,m=1$.
So that there exists at least one feasible pair $\left\langle\bar{s}_{11}(t),\bar{P}_{11}(t)\right\rangle$ (\eg $\left\langle s_{11}^{**}(t),P_{11}^{**}(t)\right\rangle$) such that $\bar{E}_{11}(t)<E_{11}^*(t)$ in \textbf{P3}.

Now we set a solution pair for \textbf{P2}, with $\bar{\textbf{J}}(t)=\textbf{J}^*(t),\bar{P}_{im}(t)=P_{im}^*(t),\bar{s}_{im}(t)=s_{im}^*(t),i\in \mathcal{B},m\in\mathcal{M}(t) \text{ except } \bar{s}_{11}(t) \text{ and } \bar{P}_{11}(t)$.

It is obvious to verify that the new solution pair is feasible.
Besides, Mathematical Induction (MI) can be used to verify that  $\theta_i\geq E_i(t)$.
So that $Q_i(t)\geq 0$.
Because $\bar{E}_{11}(t)<E_{11}^*(t)$ and $Q_i(t)\geq 0$, $\bar{v}_{im}(t)>v_{im}^*(t)$. Thus $\sum_{i\in\mathcal{B}} \sum_{m\in\mathcal{M}(t)}\bar{J}_{im}(t)\bar{v}_{im}(t)> \sum_{i\in\mathcal{B}}\sum_{m\in\mathcal{M}(t)} J_{im}^*(t)v_{im}^*(t)$, which contradicts with $\textbf{P}^*(t),\textbf{s}^*(t),\textbf{J}^*(t)$ being the optimal solution of \textbf{P2}.
\end{proof}

Through \textit{theorem \ref{th2}}, we can first solve the sub-problem \textbf{P3} for all $i\in\mathcal{B}$ and $m\in\mathcal{M}(t)$ in each slot and then substitute the optimal values $\left\langle s_{im}^{**}(t),P_{im}^{**}(t)\right\rangle$ into \textbf{P2}.

However, the Hesse matrix of function $E_{im}(t)$ is not positive definite or semi-positive definite, so $E_{im}(t)$ is not a convex function, leading to \textbf{P3} a non-convex problem.
Traditional Karush-Kuhn-Tucker (KKT) conditions cannot be applied to find the optimal solution.
To solve \textbf{P3}, we first decouple two variables $s_{im}(t)$ and $P_{im}(t)$. \ie we neglect constraint (\ref{eq6}).
\begin{lemma}\label{le1}
Without constraint (\ref{eq6}), the optimal solution for \textbf{P3} is given by
\[\left\{\begin{aligned}
&s_{im}^{**}(t)=\min\left(\sqrt[3]{\frac{\beta_i}{2\alpha_i}},s_{i,max}\right)\\
&P_{im}^{**}(t)=\left(2^{\frac{\lambda_mJ_{im}\!(t)}{\gamma W}}-1\right) \frac{N_0 d_{im}^2\!(t+t_{im}^d(t))}{g_0}
\end{aligned}\right.
\]
\end{lemma}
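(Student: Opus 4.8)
The plan is to exploit the fact that, once constraint (\ref{eq6}) is dropped, the objective $E_{im}(t)$ splits into three additive pieces: the receiving term $E_{im}^{rec}(t)=C_i^rI_m(t)$ is a constant, the computing term $E_{im}^{cpu}(t)$ depends only on $s_{im}(t)$, and the transmitting term $E_{im}^{snd}(t)$ depends only on $P_{im}(t)$ through $r_{im}(t)$. Since the only surviving constraints are (\ref{eq5}), which involves $P_{im}(t)$ alone, and $s_{im}(t)\leq s_{i,max}$, which involves $s_{im}(t)$ alone, the problem decouples into two independent single-variable minimizations that I can solve separately.

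First I would minimize the computing term. Substituting $p_{im}^{cpu}(t)=\alpha_is_{im}^3(t)+\beta_i$ gives $E_{im}^{cpu}(t)=\phi_m(t)\left(\alpha_is_{im}^2(t)+\beta_i/s_{im}(t)\right)$. Differentiating with respect to $s_{im}(t)$ and setting the derivative to zero yields $2\alpha_is_{im}^3(t)=\beta_i$, \ie the unconstrained stationary point $s=\sqrt[3]{\beta_i/(2\alpha_i)}$; the second derivative $2\phi_m(t)\left(\alpha_i+\beta_i/s_{im}^3(t)\right)$ is strictly positive on $(0,\infty)$, so the objective is strictly convex there and this stationary point is the unique global minimizer. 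Imposing $s_{im}(t)\leq s_{i,max}$ and using convexity, the constrained minimizer equals the stationary point when the latter is feasible and $s_{i,max}$ otherwise, which is exactly $s_{im}^{**}(t)=\min\left(\sqrt[3]{\beta_i/(2\alpha_i)},s_{i,max}\right)$.

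Next I would minimize the transmitting term $E_{im}^{snd}(t)=O_m(t)P_{im}(t)/r_{im}(t)$. Writing $a\triangleq g_0/\left(d_{im}^2(t+t_{im}^d(t))N_0\right)$ so that $r_{im}(t)=\gamma W\log_2(1+aP_{im}(t))$, the crux is to show that $P_{im}(t)/\log_2(1+aP_{im}(t))$ is strictly increasing in $P_{im}(t)$. The sign of its derivative reduces to the sign of $\ln(1+aP_{im}(t))-aP_{im}(t)/(1+aP_{im}(t))$, which is positive for $aP_{im}(t)>0$ because the auxiliary function $\psi(u)=\ln(1+u)-u/(1+u)$ satisfies $\psi(0)=0$ and $\psi'(u)=u/(1+u)^2>0$. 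Hence $E_{im}^{snd}(t)$ is monotone increasing, so it is minimized by driving $P_{im}(t)$ as low as the QoS constraint permits. Solving (\ref{eq5}) at equality, $\gamma W\log_2(1+aP_{im}(t))=\lambda_mJ_{im}(t)$, then gives precisely $P_{im}^{**}(t)=\left(2^{\lambda_mJ_{im}(t)/(\gamma W)}-1\right)N_0d_{im}^2(t+t_{im}^d(t))/g_0$.

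The main obstacle is the monotonicity argument for $E_{im}^{snd}(t)$: the rest is an elementary strictly convex single-variable optimization, but the transmitting term is a ratio whose behavior is not transparent, and it is exactly the inequality $\ln(1+u)>u/(1+u)$ that forces the QoS constraint to bind and pins down $P_{im}^{**}(t)$ at the boundary of (\ref{eq5}).
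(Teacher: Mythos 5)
Your proposal is correct and follows essentially the same route as the paper: decouple the objective into the $s_{im}(t)$-only computing term (minimized by convexity at $\min(\sqrt[3]{\beta_i/(2\alpha_i)},s_{i,max})$) and the $P_{im}(t)$-only transmitting term, whose monotonicity you establish via the auxiliary function $\psi(u)=\ln(1+u)-u/(1+u)$ --- exactly the paper's $F_{im}(t)$ argument up to a factor of $\ln 2$ --- forcing the QoS constraint (\ref{eq5}) to bind. The only cosmetic difference is that you spell out the convexity of the CPU subproblem where the paper merely invokes KKT conditions, and the paper separately notes the trivial $J_{im}(t)=0$ case, which your boundary formula already covers.
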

\begin{proof}
Now that \textbf{P3} can be divided into two sub-problems regarding $s_{im}(t)$ and $P_{im}(t)$, respectively.
\begin{itemize}
    \item[1)] As for $s_{im}(t)$, the sub-problem can be formulated as
    \begin{gather*}
        \textbf{P4:}\min_{s_{im}(t)}E_{im}^{cpu}(t)= \left(\alpha_is_{im}^3(t)+\beta_i\right)\cdot\frac{\phi_m(t)}{s_{im}(t)}\\
        \text{s.t. }s_{im}\leq s_{i,max}
    \end{gather*}
    \textbf{P4} is a convex optimization and the optimal solution $s_{im}^{**}(t)$ can be obtained by KKT conditions.
    \item[2)] As for $P_{im}(t)$, the sub-problem can be formulated as
    \begin{gather*}
        \textbf{P5:}\min_{P_{im}(t)}E_{im}^{snd}(t)=\frac{P_{im}(t)O_m(t)}{\gamma W\log_2\left(1+\frac{P_{im}(t)g_0}{d_{im}^2 \!\! (t+t_{im}^d(t))N_0} \right)} \\
        \text{s.t. }
        \gamma W \log_2\left(1+\frac{P_{im}(t)g_0}{d_{im}^2 \!\! (t+t_{im}^d(t))N_0} \right)\geq \lambda_mJ_{im}(t)
    \end{gather*}
    When $J_{im}(t)=0$, the constraint of \textbf{P5} can be naturally satisfied. 
    At this time UAV $i$ is not chosen to serve vehicle $m$. So we can easily set $P_{im}^{**}(t)=0$.
    Thus here the situation only when $J_{im}(t)=1$ is considered.
    
    Taking derivative of $E_{im}^{snd}(t)$, we can get 
    \[
    \frac{dE_{im}^{snd}(t)}{dP_{im}(t)}= \frac{ 
     \log_2\left(\ 1+\frac{P_{im}(t)}{\mu_{im}(t)}\right) - \frac{P_{im}(t)} {\left(P_{im}(t) + \mu_{im}(t)\right]\ln2} } {\frac{1}{\gamma W O_m(t)}
    \left[ \gamma W \log_2\left(\ 1+\frac{P_{im}(t)}{\mu_{im}(t)}\right)\right]^2},
    \]
    where $\mu_{im}(t)\triangleq \frac{N_0d_{im}^2 \! (t+t_{im}^d(t))}{g_0}$.
    Define $F_{im}(t)=\log_2\left(\ 1+\frac{P_{im}(t)}{\mu_{im}(t)}\right) - \frac{P_{im}(t)} {\left(P_{im}(t) + \mu_{im}(t)\right)\ln2}$, then
    \[
    \frac{dF_{im}(t)}{dP_{im}(t)}=\frac{P_{im}(t)}{\left( P_{im}(t)+\mu_{im}(t)\right)^2\ln2}.
    \]
    We can get $\frac{dF_{im}(t)}{dP_{im}(t)}>0$ when $P_{im}(t)>0$.
    So that $F_{im}(t)$ is monotonically increase with $P_{im}(t)$.
    $\left.F_{im}(t)\right|_{P_{im}(t)=0}=0$, thus $F_{im}(t)>0$, \ie $\frac{dE_{im}^{snd}(t)}{dP_{im}(t)}>0$.
    It is proved that $E_{im}^{snd}(t)$ is monotonically increase with $P_{im}(t)$.
    By rearranging the constraint term, we can get $P_{im}(t)\geq\left(2^{\frac{\lambda_mJ_{im}\!(t)}{\gamma W}}-1\right) \frac{N_0 d_{im}^2\!(t+t_{im}^d(t))}{g_0}$, through which we can get the optimal solution.
\end{itemize}
\end{proof}

\textit{Lemma \ref{le1}} gives the optimal solution of $s_{im}(t)$ and $P_{im}(t)$ when constraint (\ref{eq6}) is satisfied. However, sometimes the optimal solution in \textit{Lemma \ref{le1}} is not feasible in \textbf{P3}.
To this end, we provide the following theorem.
\begin{theorem}
When the solution cannot meet the constraint (\ref{eq6}), the optimal solution must be at the constraint boundary, \ie \begin{equation}\label{eq14}
\left(\frac{\phi_m(t)}{s_{im}^{**}(t)}+\frac{O_m(t)}{r_{im}^{**}(t)}\right)= \tau
\end{equation}
\end{theorem}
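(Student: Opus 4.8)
The plan is to exploit the separable, monotone structure of the objective already uncovered in \emph{Lemma \ref{le1}} and to argue by contradiction that constraint (\ref{eq6}) must be tight at the optimum. The starting observation is that the energy splits as $E_{im}(t)=E_{im}^{rec}(t)+E_{im}^{cpu}(t)+E_{im}^{snd}(t)$, where the receiving term is constant, $E_{im}^{cpu}(t)$ depends only on $s_{im}(t)$ and is strictly convex with unique unconstrained minimizer $s_{im}^{**}(t)$, and $E_{im}^{snd}(t)$ depends only on $P_{im}(t)$ and is strictly increasing (both facts were established in the proof of \emph{Lemma \ref{le1}}). Hence, in the absence of (\ref{eq6}), the energy is minimized exactly by taking $s_{im}(t)=s_{im}^{**}(t)$ and pushing $P_{im}(t)$ down to the QoS boundary $P_{im}^{**}(t)$ forced by (\ref{eq5}); this is the pair from \emph{Lemma \ref{le1}}.

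I would first record the monotonicity of the constraint itself. Writing $g_{im}(s_{im},P_{im})=\frac{\phi_m(t)}{s_{im}}+\frac{O_m(t)}{r_{im}}$ for the left-hand side of (\ref{eq6}) when $J_{im}(t)=1$, note that $g_{im}$ is strictly decreasing in $s_{im}$ and, because $r_{im}$ is strictly increasing in $P_{im}$, strictly decreasing in $P_{im}$ as well. Thus to reduce $g_{im}$ below $\tau$ one must raise $s_{im}$ or $P_{im}$, and raising either of these away from the unconstrained optimum can only increase the energy.

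The core of the argument is the contradiction. Suppose the optimal pair $(s_{im}^*(t),P_{im}^*(t))$ of \textbf{P3} satisfied (\ref{eq6}) strictly, i.e. $g_{im}(s_{im}^*,P_{im}^*)<\tau$. I would then move $P_{im}^*$ toward $P_{im}^{**}$ and $s_{im}^*$ toward $s_{im}^{**}$, using continuity of $g_{im}$ to remain inside $g_{im}<\tau$ under a small perturbation. Each such move strictly lowers $E_{im}^{snd}$ by monotonicity or $E_{im}^{cpu}$ by convexity, contradicting optimality, unless the perturbation is blocked, which can happen only when $P_{im}^*=P_{im}^{**}$ and $s_{im}^*=s_{im}^{**}$ simultaneously. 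But that remaining case is precisely the unconstrained optimum of \emph{Lemma \ref{le1}}, which by hypothesis violates (\ref{eq6}) and is therefore infeasible for \textbf{P3}, again a contradiction. The only consistent alternative is $g_{im}(s_{im}^*,P_{im}^*)=\tau$, which is exactly (\ref{eq14}).

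I would organize the perturbation step into the subcases $s_{im}^*>s_{im}^{**}$ and $s_{im}^*<s_{im}^{**}$: in the former, decreasing $s_{im}$ lowers $E_{im}^{cpu}$ but raises $g_{im}$, so it is the strict slack $g_{im}<\tau$ that leaves room; in the latter, increasing $s_{im}$ lowers both $E_{im}^{cpu}$ and $g_{im}$. The main obstacle is precisely this bookkeeping, namely verifying in every case that the improving direction still respects $s_{im}\le s_{i,max}$ and the QoS bound (\ref{eq5}) and that the strict inequality $g_{im}<\tau$ furnishes a genuine energy-decreasing step. A slightly cleaner route, which I might prefer, is to observe that if (\ref{eq6}) were inactive at the optimum then $(s_{im}^*,P_{im}^*)$ would also be locally optimal for the relaxed problem with (\ref{eq6}) removed; since that relaxed problem separates and has $(s_{im}^{**},P_{im}^{**})$ as its unique minimizer by \emph{Lemma \ref{le1}}, the two points must coincide, contradicting the infeasibility of $(s_{im}^{**},P_{im}^{**})$.
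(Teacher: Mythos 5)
Your proof is correct, and its core is the same as the paper's: assume (\ref{eq6}) holds with strict inequality at the optimum of \textbf{P3}, construct a feasible perturbation that strictly lowers $E_{im}(t)$ using the monotonicity facts from \emph{Lemma \ref{le1}}, and contradict optimality. The difference is in execution, and it is worth spelling out. The paper perturbs only the power: it lowers $P_{im}(t)$ to a value $\bar{P}_{im}(t)$ at which (\ref{eq6}) becomes tight, then invokes monotonicity of $E_{im}^{snd}(t)$ to get $\bar{E}_{im}(t)<E_{im}^{**}(t)$. It never verifies that this tightened point still satisfies the QoS constraint (\ref{eq5}), i.e.\ that $\bar{r}_{im}(t)\geq\lambda_m$; this is not automatic, since it can fail exactly when the optimum's CPU speed exceeds the \emph{Lemma \ref{le1}} value, in which case lowering $P_{im}(t)$ hits the QoS floor before the time budget $\tau$ is exhausted. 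Your two-variable perturbation avoids this: you only decrease $P_{im}(t)$ while staying at or above $P_{im}^{**}(t)$ (so (\ref{eq5}) is preserved), and you move $s_{im}(t)$ toward $s_{im}^{**}(t)$ (so the cap $s_{i,max}$ is respected and $E_{im}^{cpu}(t)$ strictly decreases by strict convexity), with the slack in (\ref{eq6}) plus continuity keeping the move feasible; the only configuration where no improving move exists is $\left(s_{im}^{**}(t),P_{im}^{**}(t)\right)$ itself, which the theorem's hypothesis makes infeasible. So your bookkeeping closes a genuine, if small, gap in the published argument. Your alternative route --- if (\ref{eq6}) were inactive, the optimum would be a local optimum of the relaxed problem, and by separability (strict convexity in $s_{im}$, strict monotonicity in $P_{im}$ over the QoS-feasible ray) a local optimum of the relaxed problem must coincide with the \emph{Lemma \ref{le1}} point, which is infeasible --- is also sound and is arguably the cleanest formulation. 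What the paper's version buys is brevity; what yours buys is that every step of the contradiction is verified feasible, which is precisely what makes the argument airtight. One cosmetic note: $s_{im}^{**}(t)$ is the minimizer of $E_{im}^{cpu}(t)$ under the cap $s_{im}(t)\leq s_{i,max}$, not the unconstrained minimizer $\sqrt[3]{\beta_i/2\alpha_i}$; your argument only uses motion toward the capped minimizer, so nothing breaks, but the wording should say so.
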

\begin{proof}
We prove by contradiction. Assume $\left(\frac{\phi_m(t)}{s_{im}^{**}(t)}+\frac{O_m(t)}{r_{im}^{**}(t)}\right)< \tau$, then we can find a proper value of $\bar{P}_{im}(t)$ such that $\left(\frac{\phi_m(t)}{s_{im}^{**}(t)}+\frac{O_m(t)}{\bar{r}_{im}(t)}\right)= \tau$, where $\bar{r}_{im}(t)=\gamma W \log_2\left(1+\frac{\bar{P}_{im}(t)g_0}{d_{im}^2 \!\! (t+t_{im}^d(t))N_0} \right)$.
Note that $\bar{P}_{im}(t)<P_{im}^{**}(t)$ because $r_{im}(t)$ is monotonically increase with $P_{im}(t)$.
According to the aforementioned lemma, $E_{im}^{snd}(t)$ is monotonically increase with $P_{im}(t)$.
So it can be obtained that $\bar{E}_{im}^{snd}(t)<E_{im}^{snd\,**}(t)$, further, $\bar{E}_{im}(t)<E_{im}^{**}(t)$, which contradicts with $s_{im}^{**}(t),P_{im}^{**}(t)$ being the optimal solution of \textbf{P3}.
\end{proof}

To sum up, we solve \textbf{P3} by the following steps. First we adopt the values of $s_{im}(t)$ and $P_{im}(t)$ in \textit{lemma \ref{le1}} and check the feasibility.
If constraint (\ref{eq6}) is satisfied, the adopted values are the optimal solution.
Otherwise, we substitute (\ref{eq14}) into \textbf{P3} and formulate a one-dimensional optimization problem, which can be solved via well-known methods. 

With the optimal solution $E_{im}^{**}(t)$ given, we can further transform \textbf{P2}, \ie
\[
\begin{gathered}\textbf{P6:}\max_{\textbf{J}(t),}\sum_{i\in\mathcal{B}}\!\sum_{m\in\mathcal{M}(t)}\!\!\!\!J_{im}\!(t)v_{im}^{**}(t)\\
\text{s.t. }(\ref{eq2}),(\ref{eq3}),(\ref{eq7}) \text{ and } (\ref{eq9})
\end{gathered}
\]
where $v_{im}^{**}(t)=Vp_m(t)-Q_i(t)E_{im}^{**}(t)$ is given, and it can be regarded as the weight of $J_{im}(t)$. 
While with these constraint, \textbf{P6} is a non-standard assignment problem.
We aim to convet it to a standard assignment problem such that classic methods can be applied.
\begin{theorem}\label{th4}
By setting $\theta_i=\frac{Vp_{max}}{E_{i,min}}+c_iE_{i,max},i\in\mathcal{B}$, where $p_{max}$, $E_{i,min}$ and $E_{i,max}$ denote the maximum payment from users, minimum and maximum energy consumption of UAV $i$ to serve one vehicle, respectively, the constraint (\ref{eq9}) is indeed redundant.
\end{theorem}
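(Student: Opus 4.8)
The plan is to show that constraint (\ref{eq9}) is never active at the optimum of \textbf{P6}, so that deleting it cannot change the feasible optimum; equivalently, any weight-maximizing assignment automatically satisfies $E_i(t)\geq\sum_{m\in\mathcal{M}(t)}J_{im}(t)E_{im}^{**}(t)$ for every $i$ and every $t$ once $\theta_i$ is fixed as stated. The whole argument hinges on a self-regulating property of the objective weights $v_{im}^{**}(t)=Vp_m(t)-Q_i(t)E_{im}^{**}(t)$: a heavily drained UAV (large $Q_i(t)$) generates negative weights and therefore receives no new task, which is precisely what shields its battery.

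First I would record two elementary facts. From the channel cap (\ref{eq7}), at most $c_i$ tasks can be newly assigned to UAV $i$ in a slot and each costs at most $E_{i,max}$, so $\sum_{m\in\mathcal{M}(t)}J_{im}(t)E_{im}^{**}(t)\leq c_iE_{i,max}$. Next, since \textbf{P6} maximizes a sum of weights and tasks may be left unassigned (constraint (\ref{eq2}) is an inequality), an optimal assignment never sets $J_{im}(t)=1$ when $v_{im}^{**}(t)<0$, because dropping such an assignment strictly raises the objective. Using $E_{im}^{**}(t)\geq E_{i,min}$ and $p_m(t)\leq p_{max}$, whenever $Q_i(t)>\frac{Vp_{max}}{E_{i,min}}$ we get $Q_i(t)E_{im}^{**}(t)>Vp_{max}\geq Vp_m(t)$, so every weight on UAV $i$ is strictly negative and the optimal assignment yields $\sum_{m}J_{im}(t)E_{im}^{**}(t)=0$.

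Next I would prove, by induction on $t$, that $E_i(t)\geq 0$ and that (\ref{eq9}) holds, jointly. The base case is immediate since $E_i(0)=\theta_i>0$. For the step, assuming $E_i(t)\geq 0$ I would split on the same sign test. If $Q_i(t)\leq\frac{Vp_{max}}{E_{i,min}}$, then by the definition $\theta_i=\frac{Vp_{max}}{E_{i,min}}+c_iE_{i,max}$ we have $E_i(t)=\theta_i-Q_i(t)\geq c_iE_{i,max}\geq\sum_{m}J_{im}(t)E_{im}^{**}(t)$, so (\ref{eq9}) holds. Otherwise $Q_i(t)>\frac{Vp_{max}}{E_{i,min}}$, and the self-regulation fact forces $\sum_{m}J_{im}(t)E_{im}^{**}(t)=0\leq E_i(t)$, so (\ref{eq9}) again holds. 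Finally, feeding (\ref{eq9}) at slot $t$ into the battery dynamics (\ref{eq8}) gives $E_i(t+1)=E_i(t)+e_i(t)-\sum_{m}J_{im}(t)E_{im}(t)\geq e_i(t)\geq 0$, which closes the induction. Since (\ref{eq9}) is satisfied along the entire optimal trajectory, it is redundant.

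I expect the main obstacle to be handling the coupling across time slots cleanly: the assignment at slot $t$ depends on $Q_i(t)$, which is itself produced by earlier decisions, so the claim cannot be checked slot-by-slot in isolation and must be carried through the joint induction above. A secondary point to watch is that $e_i(t)\geq 0$ and $Q_i(t)\geq 0$; these hold because the charging rule $e_i(t)=\min(\theta_i-E_i(t),\eta_i(t))$ caps the battery at $\theta_i$, giving $0\leq E_i(t)\leq\theta_i$ throughout, which is what keeps both the sign test and the bound $E_i(t)\geq c_iE_{i,max}$ meaningful.
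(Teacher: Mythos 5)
Your proof is correct and follows essentially the same route as the paper's: your case split on $Q_i(t)$ versus $\frac{Vp_{max}}{E_{i,min}}$ is exactly the paper's split on $E_i(t)$ versus $c_iE_{i,max}$, with negative weights $v_{im}^{**}(t)<0$ forcing zero assignment in the drained case and the channel-cap bound $\sum_{m}J_{im}(t)E_{im}^{**}(t)\leq c_iE_{i,max}$ covering the other. Your explicit justification that negative-weight tasks are dropped (since (\ref{eq2}) is an inequality) and your induction establishing $E_i(t)\geq 0$ along the optimal trajectory make rigorous two steps the paper leaves implicit, but the underlying argument is identical.
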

\begin{proof}
By setting $\theta_i$, if $E_i(t)<c_iE_{i,max}$, then
\begin{align*}
    v_{im}(t)&=Vp_m(t)-\left( \theta_i-E_i(t)\right)E_{im}(t)\\
    &=Vp_m(t)-\left( \frac{Vp_{max}}{E_{i,min}}+c_iE_{i,max}-E_i(t)\right)E_{im}(t)\\
    &<Vp_{max}-\frac{Vp_{max}}{E_{i,min}}\cdot E_{im}(t)\leq0
\end{align*}
That is to say, the optimal $J_{im}^*(t)=0,m\in\mathcal{M}(t)$.
Then (\ref{eq9}) is satisfied.
On the other hand, note that $\sum_{m\in\mathcal{M}(t)}J_{im}(t)E_{im}(t)\leq c_iE_{im}(t)\leq c_iE_{i,max}$ because of constraint (\ref{eq7}).
Thus if $E_i(t)\geq c_iE_{i,max}$, constraint (\ref{eq9}) is satisfied, too.
To sum up, the constraint (\ref{eq9}) is indeed redundant.
\end{proof}

We still need some procedures to convert the problem to standard assignment problem.
\begin{itemize}
    \item [1)]Based on (\ref{eq3}), whether vehicle $m$ is within the coverage of UAV $i$ or not must be judged.
    The location $l_m(t+t_{im}^d(t))\triangleq\left(x_m(t+t_{im}^d(t)),y_m(t+t_{im}^d(t))\right)$ of vehicle $m$ is given by 
    \begin{equation}
        l_m(t+t_{im}^d(t))=l_m(t)+\vec{v}_m(t)\tau t_{im}^d(t).
    \end{equation}
    At each slot, the algorithm checks whether $l_m(t),l_m(t+t_{im}^d(t))\in C_i$, where $C_i$ denotes the coverage of UAV $i$.
    If either one not so, set $v_{im}(t)=0$, thus we make sure the optimal $J_{im}^*(t)=0$, which satisfies (\ref{eq3}).
    \item [2)]At each slot, Eq (\ref{eq15}) is applied to update $y_i(t)$ and $\sum_{m\in\mathcal{M}(t)}J_{im}(t)$ is determined by (\ref{eq7}).
\end{itemize}

The transformed problem is a standard assignment problem which the Hungarian algorithm \cite{Kuhn2010} can be utilized to solve.

\subsection{Offline Optimization} 

We focus on finding an indicator that measure the efficiency of UAVs' position. The position of the UAVs cannot be determined through online procedures, because we must grasp the task arrival situation and vehicle position distribution at the current moment if online optimization is applied, which is unrealistic in practice. In the actual situation, a UAV cannot know the stochastic task arrival situation until it reach a place and provide service. Therefore, the deployment of the UAVs must be determined through offline calculations in advance. Here, the historical distribution data of vehicles is needed. We use $f_{x,y}(t)$ to denote their distribution density function in slot t. The position of a UAV will affect the communication when the UAV transmits the output back to a vehicle, which in turn will affect the energy consumption of the task.

Without knowing the specifics of vehicles, UAV $i$ must guarantee that $P_i(t)= \left(2^{\frac{\lambda_{max}}{\gamma W}}-1\right) \frac{N_0 d_{x,y}^{i\,2}(t)}{g_0},(x,y)\in C_i$, where $\lambda_{max}$ denotes the maximum QoS requirement, and $d_{x,y}^{i\,2}(t)$ denote the Euclidean distance between UAV $i$ and vehicle located at $(x,y)$.

Inspired by \cite{7510870}, the average total transmit power of the UAVs in the network is given by
\begin{equation}
    \bar P(t)=\frac{\sum_{i\in\mathcal{B}}\int\!\!\!\int_{C_i}c_iP_i(t)f_{x,y}(t) \,dx\,dy}{\sum_{i\in\mathcal{B}}c_i}
\end{equation}

In \textit{lemma \ref{le1}} we proved that the communication energy consumption is increase with transmission power.
Our goal is to minimize the energy consumption in each slot, So $\bar{P}(t)$ is a suitable indicator to measure the efficiency of UAVs' position, and the following formulation is equivalent
\[
\min_{\textbf{L}(t)}\bar{P}(t)
\]
Minimizing $\bar{P}(t)$ is equivalent to minimize $Z_i(t)\triangleq\int\!\!\!\int_{C_i}c_iP_i(t)f_{x,y}(t) dx\,dy,\forall i\in \mathcal{B}$. 
By applying KKT conditions such that $\left\{\begin{aligned}
\frac{\partial Z_i(t)}{\partial x_i}=0\\
\frac{\partial Z_i(t)}{\partial y_i}=0
\end{aligned}\right.$, the optimal solution is given by
\begin{equation}\label{eq16}
    \left\{\begin{aligned}
        x_i^*(t)=\frac{\int\!\!\!\int_{C_i}xf_{x,y}(t)\,dx\,dy}{\int\!\!\!\int_{C_i}f_{x,y}(t)\,dx\,dy}\\
        y_i^*(t)=\frac{\int\!\!\!\int_{C_i}yf_{x,y}(t)\,dx\,dy}{\int\!\!\!\int_{C_i}f_{x,y}(t)\,dx\,dy}
    \end{aligned}\right.,i\in\mathcal{B}.
\end{equation}
Note that the UAV has limited hover speed,\ie (\ref{eq4}). Equation (\ref{eq16}) gives the ideal deployment distribution. For simplicity, let $L_i(t)$ denote $\left(x_i(t),y_i(t)\right)$. The practical deployment should be rewritten as
\begin{equation}\label{eq17}
    L_i^{*'}(t)=\left\{\begin{aligned}
    &L_i^{*}(t),\qquad\left|\left|L_i^{*}(t)-L_i^{*'}(t-1)\right|\right|\leq V_{max}\tau\\
    &L_i^{*'}(t-1)+\frac{V_{max}\tau \left(L_i^{*}(t)-L_i^{*'}(t-1) \right)}{\left|\left|L_i^{*}(t)-L_i^{*'}(t-1)\right|\right|},otherwise
    \end{aligned}\right.
\end{equation}
\section{NUMERICAL RESULTS} \label{sec5}

In this section, we evaluate the efficiency and performance of the proposed algorithm JOAoDR by presenting simulation results.
We consider a rectangular area which is the coverage of a BS.
Two UAVs are deployed and each of them has fixed service range. The coverage of them are partially overlapped.

In our simulation, we set the length of one time slot $\tau=5$ s.
The maximum speed of UAVs $V_{max}$ is $5$ m/s. The velocities of ground vehicles are randomly distributed in $[10,20]$ m/s. 
At the beginning of each time slot, tasks or requests are randomly generated. The input and output data sizes are set within $[4000,10000]$ Kb and $[2000,10000]$ Kb, respectively. The required QoS is in the interval $[256,768]$ Kb/s. For simplicity, transmission power of all vehicles are set to $10$ mW, and the CPU cycles needed to process one unit size of tasks are set to $1000$ cycle/bit. 
The system-specified parameters of UAVs are $\alpha=0.05$ and $\beta=0.9$. The channel gain $g_0$ is $-50$ dB and the noise power $N_0$ is $10^{-8}$ W. The ratio of actual transmission rate to channel capacity $\gamma=0.95$.
\begin{figure}[tpb]
    \centering
    \subfigure[]{
    \label{fig2a}
    \begin{minipage}{\linewidth}
    \centering
    \includegraphics[scale=0.5]{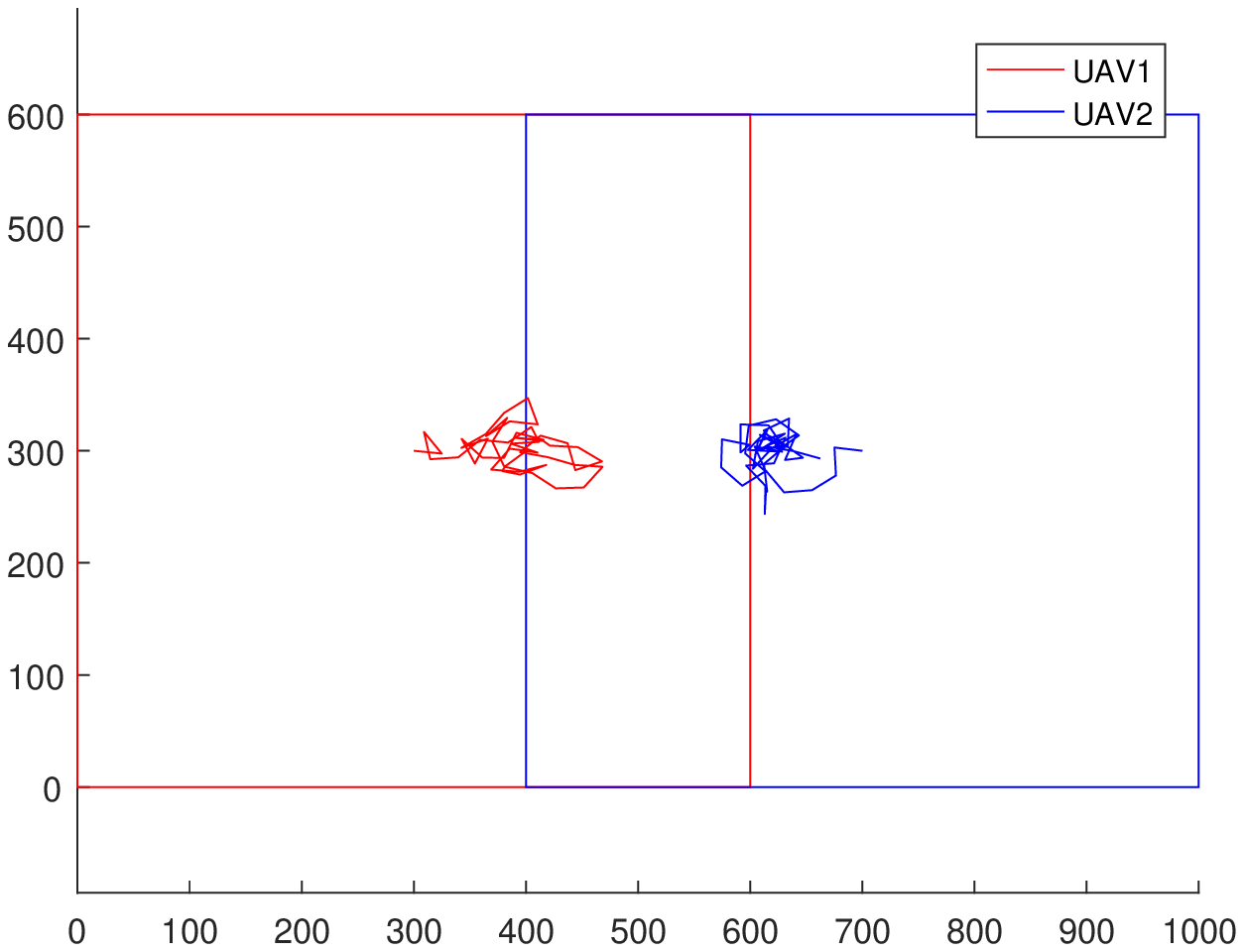}
    \end{minipage}
    }
    \subfigure[]{
    \label{fig2b}
    \begin{minipage}{\linewidth}
    \centering
    \includegraphics[scale=0.5]{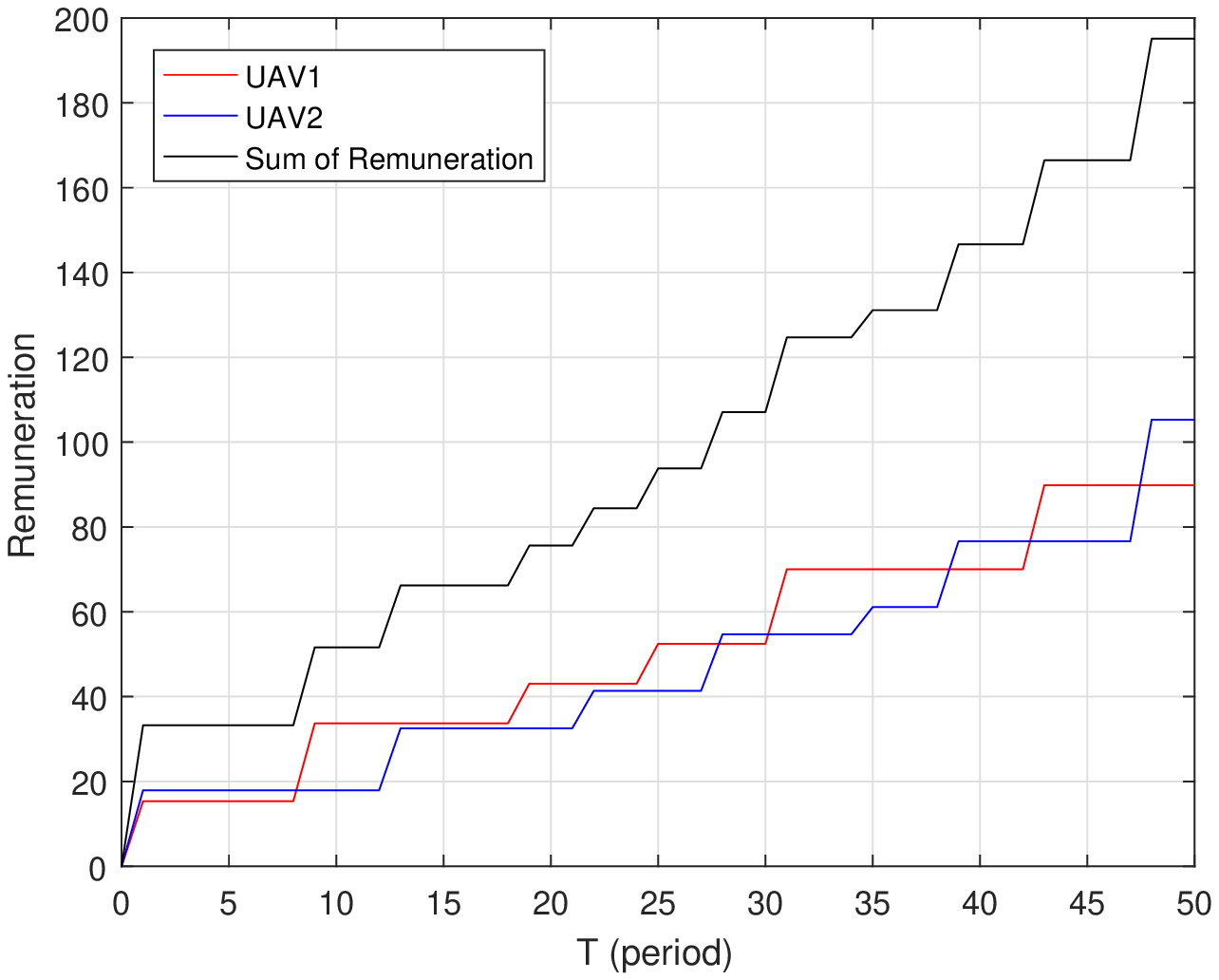}
    \end{minipage}
    }
    \subfigure[]{
    \label{fig2c}
    \begin{minipage}{\linewidth}
    \centering
    \includegraphics[scale=0.5]{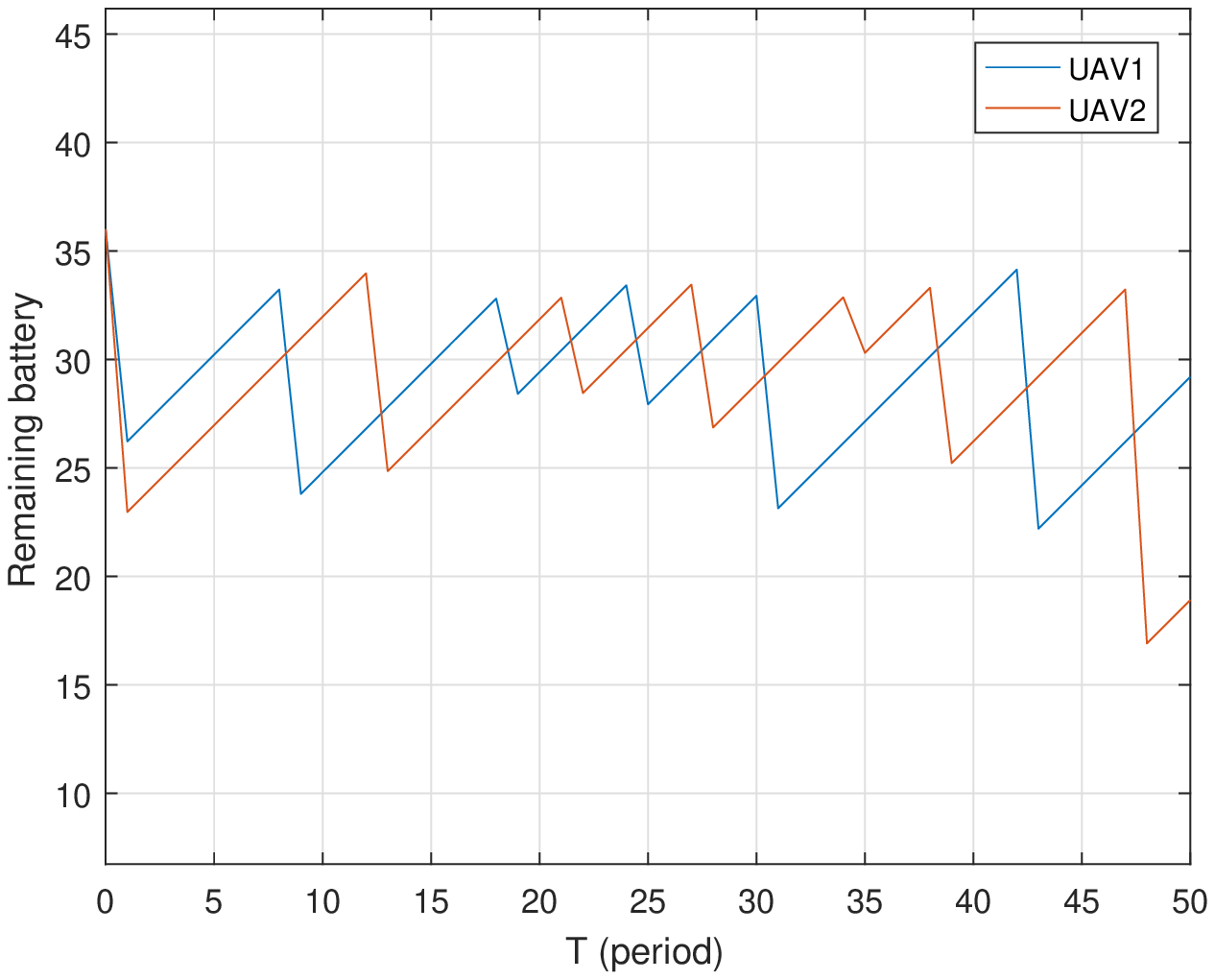}
    \end{minipage}
    }
    \subfigure[]{
    \label{fig2d}
    \begin{minipage}{\linewidth}
    \centering
    \includegraphics[scale=0.5]{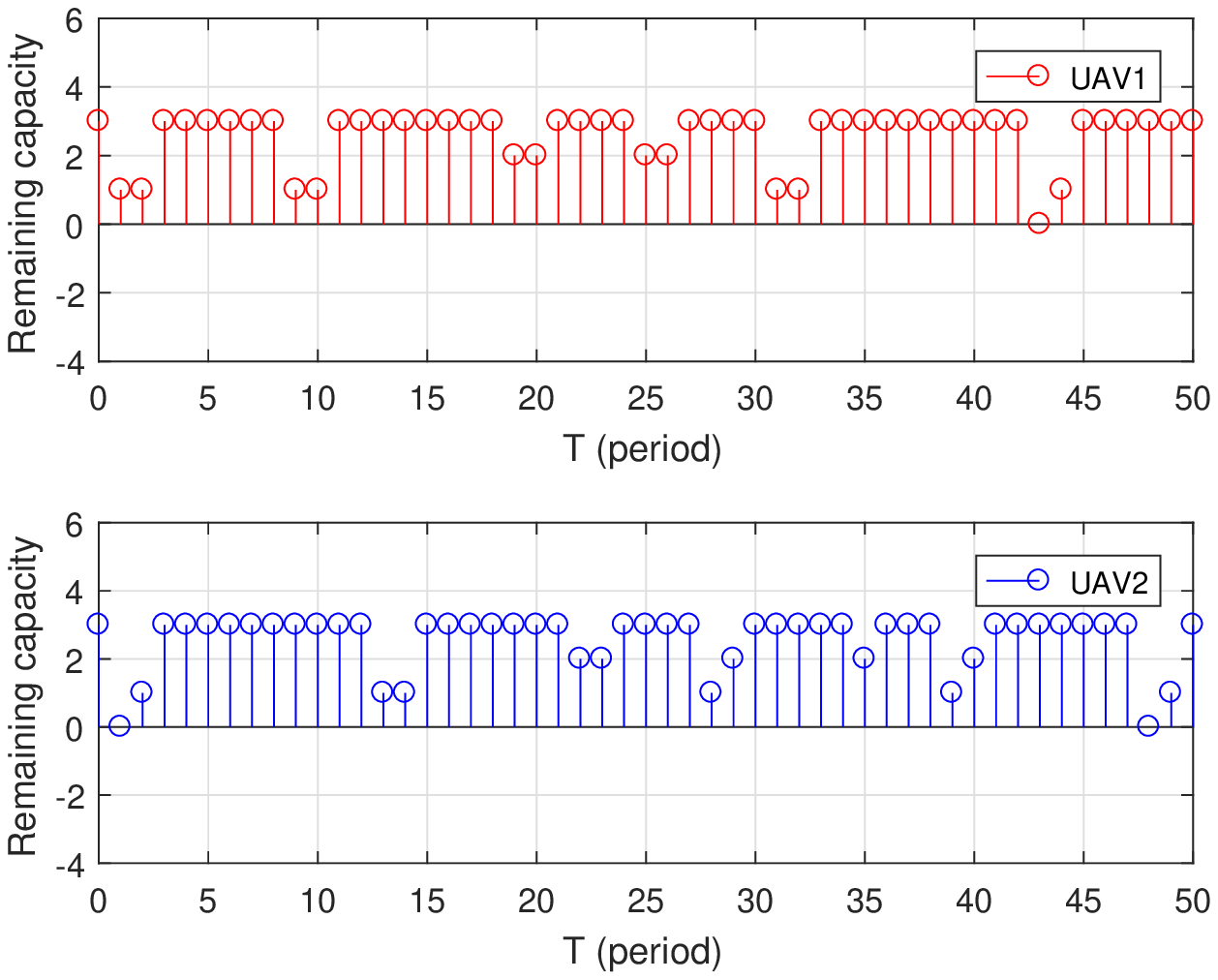}
    
    \end{minipage}
    }
    \caption{The trajectory, remuneration and consumption of UAVs}
    \label{fig2}
\end{figure}

Fig.\ref{fig2} depicts the trajectory and remuneration of both UAVs.
The red and blue rectangular areas in Fig. \ref{fig2a} denote the coverage of UAV $1$ and UAV $2$, respectively.
The advantage of the deployment in our JOAoDR is shown in Fig. \ref{fig4} which we will discuss later.
Fig. \ref{fig2b} illustrates two UAVs remuneration versus time periods $T$. In our setup, both UAVs hover at a fixed altitude of $300$ m, with energy harvest rate of $200$ mW and $V=2$. The curves present the form of the ladder to rise, because once a UAV processes a task and gets the reward, it will cost several time slots to finish it and collect energy for other tasks. In fact, the UAVs should be deployed at a relatively high altitude to avoid non-line-of-sight (NLoS) link.
Fig. \ref{fig2c} shows the remaining battery energy of each UAV versus $T$. The proposed JOAoDR can balance the energy consumption and rewards from the users to obtain a stable lone-term performance. it can be observed obviously that the remaining energy won't run out with a proper setup $\theta_i$ (In this case we set $\theta_i=36$ J) and verified the correctness of \textit{theorem \ref{th4}}.
Fig. \ref{fig2d} presents the remaining capacity of each UAV, \ie the number of tasks it can process simultaneously at the end of each slot. From the long-term perspective, UAVs will basically not fully put resources at a certain time slot, resulting in the lack of resources and energy in subsequent time slots to handle newly arrived high-value tasks.

\begin{figure}[tpb]
  \centering
     \includegraphics[scale=0.5]{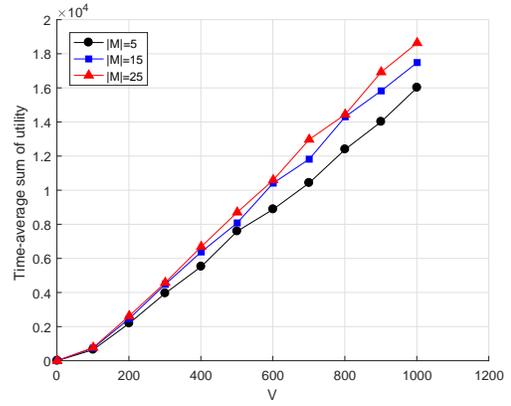}
    \caption{Time average utility}
    \label{fig3}
\end{figure}

Fig. \ref{fig3} presents time average utility of the operator system versus control parameter $V$ in three different cases where the number of users in the network is $5$, $15$ and $25$, respectively.
The system utility is consistent with the objective function of \textbf{P2}.
When the control parameter is small, the number of vehicles in the network cannot significantly affect the overall utility of operator system, because our algorithm is tend to concern the energy consumption of UAVs at that time. As the parameter increased, the slight reward gaps between different tasks will be magnified. At this time, the greater the number of users in the network, the greater the probability of tasks with higher rewards, so the system utility gaps will gradually increase.

\begin{figure}[tpb]
    \centering
    \includegraphics[scale=0.5]{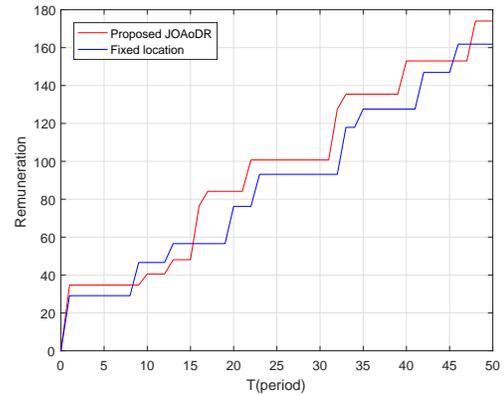}
    \caption{Remuneration comparison of different deployment methods}
    \label{fig4}
\end{figure}

Fig. \ref{fig4} illustrates the UAVs deployment method in our JOAoDR
outperforms the fixed deployment method as a benchmark. In the benchmark method, both UAVs are deployed at the geometric center of their coverage and do not move between slots.
Our algorithm can adjust the location of UAVs in each time slot according to historical distribution of ground vehicles. Because the distribution of vehicles has great similarity in time, this method can estimate the distribution of vehicles to a certain extent and let the UAV fly to the best location to reduce the energy consumption of communication with the ground vehicles. So that UAVs are able to serve more vehicles to increase remuneration.

\begin{figure}[tpb]
    \centering
    \includegraphics[scale=0.5]{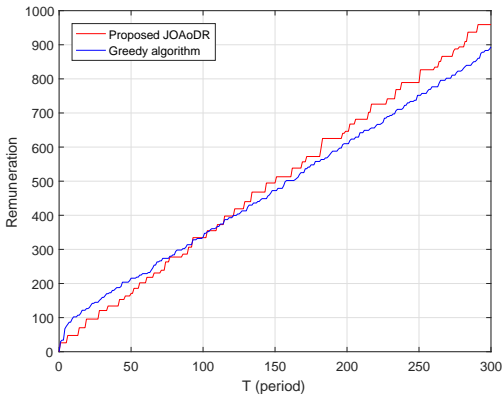}
    \caption{Remuneration comparison of different resource allocation methods}
    \label{fig5}
\end{figure}

In order for comparison, a greedy algorithm is introduced. It greedy seeks the tasks with the highest payments from vehicles, and process them if possible in every time slot.
It can be seen from Fig. \ref{fig5} that when the number of time periods is less than $90$, the greedy algorithm gains more than our algorithm, because the greedy algorithm utilize the energy and computing resources of the UAVs to process the most high-value tasks as much as possible. But over time, the superiority of our algorithm becomes more and more obvious, because our algorithm well balances the energy consumption of the UAVs and the completion rewards, and it can avoid the situation that one UAV cannot process high-value tasks at a certain moment due to insufficient computing resource. Compared with greedy algorithms, our JOAoDR greatly improves long-term performance.

\section{CONCLUSION} \label{sec6}
In this paper, we proposed a Lyapunov-based algorithm to balance the resource and rewards of the UAVs, and solved a long-term profit maximization problem in terms of the operator.
First, Lyapunov optimization was applied to transform origin problem.
Then our JOAoDR was proposed to optimize the deployment and the resource allocation of UAVs.
Numerical results demonstrated that our algorithm outperforms other benchmarks algorithm, and validated our solution.

\bibliographystyle{IEEEtran}
\bibliography{IEEEabrv,mylib}

\end{document}